\let\proof\@undefined
\let\endproof\@undefined
\newcommand{\cT}{{\mathcal T}}
\newtheorem{observation}{Observation}
\author{Alexander Grigoriev \inst{1}, Steven Kelk\inst{2}, Nela Leki{\'c}\inst{2}}
\institute{
Department of Quantitative Economics, Maastricht University, P.O. Box 616, 6200 MD Maastricht, The Netherlands, \email{a.grigoriev@maastrichtuniversity.nl}
\and
Department of Knowledge Engineering (DKE), Maastricht University, P.O. Box 616, 6200 MD Maastricht, The Netherlands, \email{steven.kelk@maastrichtuniversity.nl, nela.lekic@maastrichtuniversity.nl}
}
\begin{document}
\title{On low treewidth graphs and supertrees}
\maketitle

\begin{abstract}
\noindent
 Compatibility of unrooted phylogenetic trees is a well studied problem in phylogenetics. It asks to determine whether for a set of $k$ input trees $T_1,...,T_k$ there exists a larger tree (called a supertree) that contains the topologies of all $k$ input trees. When any such supertree exists we call the instance compatible and otherwise incompatible. It is known that the problem is NP-hard and FPT, although a constructive FPT algorithm is not known. It has been shown that whenever the treewidth of an auxiliary structure known as the display graph is strictly larger than the number of input trees, the instance is incompatible. Here we show that whenever the treewidth of the display graph is at most 2, the instance is compatible. Furthermore, we give a polynomial-time algorithm to construct a supertree in this case. Finally, we demonstrate both compatible and incompatible instances that have display graphs with treewidth 3, highlighting that the treewidth of the display graph is (on its own) not sufficient to determine compatibility.
\end{abstract}

\section{Introduction}
One of the central challenges within computational evolutionary biology is to infer the evolutionary history of a set of contemporary species (or more generally, \emph{taxa})  $X$ using only the genotype of the contemporary species. This evolutionary history is usually modeled as a \emph{phylogenetic tree}, essentially a tree in which the leaves are bijectively labeled by the elements of $X$ and the internal nodes of the tree represent (hypothetical) ancestors \cite{SemSte03}.

There is already an extensive literature available on the extent to which different optimization criteria on the space of phylogenetic trees (e.g. likelihood, parsimony) are able to identify the ``true'' evolutionary history. In any case it is well-known that most of these optimization criteria are NP-hard, and this intractability is a serious problem when constructing phylogenetic trees for large numbers of taxa. This has been one of the motivations behind \emph{supertree} methods \cite{BinindaEmonds2004}. Here the goal is to first construct phylogenetic trees for small (overlapping) subsets of $X$ and then to puzzle the partial trees together into a single tree on $X$ that contains all the topologies of the partial trees, in which case we say the partial trees are \emph{compatible}, or to conclude that no such tree exists.

The computational complexity landscape of the compatibility problem is uneven. In the case that all the partial trees are rooted (i.e. in which the flow of evolution is assumed to be away from a designated root, towards the taxa) the problem is polynomial-time solvable, using the algorithm of Aho \cite{Aho81}. However, in the case of unrooted trees the problem is NP-hard, even when all the partial trees have at most 4 taxa \cite{Steel92}. Nevertheless, due to the fact that many tree-building algorithms actually construct \emph{un}rooted trees, and because of the risk of distorting the underlying phylogenetic signal through a poor choice of root location, it remains attractive to try and solve this NP-hard variant of the problem directly.

In this article we approach the unrooted compatibility problem from a graph-theoretical angle. There is a recent trend in this direction, which to a large extent can be traced back to a seminal paper of Bryant and Lagergren \cite{BryLag06}. They observed that there is a relationship between the compatibility question and the \emph{treewidth} of an auxiliary graph known as the \emph{display graph}. The display graph is obtained by identifying the taxa of the input trees, and treewidth is an intensely well-studied parameter in the algorithmic graph theory literature (see e.g. \cite{BodlaenderK10}). Low (or bounded) treewidth often facilitates algorithmic tractability, and given that it is a measure of ``distance from being a tree'', it is tempting to try and exploit this tractability in questions pertaining to phylogenetic compatiblity and incongruence. Bryant and Largergren observed that for $k$ unrooted trees to be compatible, it is necessary (but not sufficient) that the display graph has treewidth at most $k$. The upper bound on the treewidth that this condition generates, subsequently makes it possible to formulate and answer the compatibility question in a computationally efficient way. However, this efficiency is purely theoretical in nature, obtained via the indirect route of monadic second order logic \cite{courcelle1990monadic}, and it remains a challenge to succinctly characterize phylogenetic compatibility. Since Bryant and Largergren various other authors have picked up this thread (e.g. \cite{GruHum08}), with particular attention for triangulation-based approaches (see e.g. \cite{vakati2011graph,gysel2012reducing,VakBac13}) although the question remains: what \emph{exactly} is the role of treewidth in compatibility?

Here we take a step forward in understanding the link between treewidth and compatibility. We prove that if the display graph of a set of unrooted binary trees has treewidth at most 2, then the input trees are compatible, and this holds for any number of input trees. In other words, it is not necessary to look deeper into the structure of the display graph, compatibility is immediately guaranteed. The proof of this, based on graph separators and graph minors, is surprisingly involved. Moreover, we describe a simple polynomial-time algorithm to construct a supertree for the input trees, when this condition holds. We also show that in some sense this result is ``best possible'': we show how to construct both compatible and incompatible instances that have display graphs of treewidth 3, for any number of trees. This confirms that the treewidth of the display graph cannot, on its own, fully capture phylogenetic compatibility, and that auxiliary information is indeed necessary if we are to obtain a complete characterization. 

Clearly, the significance of this result does not lie in its immediate practical relevance.
Rather, the main contribution of this article is that it opens the door to the possibility that existing ``descriptive'' characterisations of compatibility (e.g. legal triangulations \cite{vakati2011graph}) can be specialized into simple and efficient combinatorial algorithms when the display graph has sufficiently low treewidth.

%



\section{Preliminaries}
Let $X$ be a finite set. An \emph{unrooted phylogenetic $X$-tree} is a tree whose leaves are bijectively labeled by the elements of set $X$. It is called \emph{binary} when all its inner nodes (nonleaf nodes) are of degree 3. An unrooted binary phylogenetic tree on four leaves is called a \emph{quartet}. In the remainder of the article we focus almost exclusively on
unrooted binary trees, often writing simply trees or $X$-trees for short.

We call elements of $X$ \emph{taxa} or \emph{leaves}. For some $X$-tree $T$ and some subset $X' \subseteq X$ we denote by $T(X')$ the subtree of $T$ induced by $X'$ and by $T|X'$ the tree obtained from $T(X')$ by suppressing vertices of degree 2. Furthermore, we say a tree $S$ \emph{displays} a tree $T$ if $T$ can be obtained from a subgraph of $S$
by suppressing vertices of degree two.

Given a set $X$
a \emph{split} is defined as a bipartition of $X$. If we label the components of the partition by $A$ and $B$, then we can denote the split by $A|B$. Note that each edge of an $X$-tree naturally induces
a split.
If $A|B$ is a split induced by an edge of a tree $T$, then we say that $T$ \emph{contains} split $A|B$.
We use $ab|cd$ to denote the quartet in which taxa $a$ and $b$ are on one side of the internal edge and $c$ and $d$ are on the other. We write $ab|cd \in T$ if $T$ displays $ab|cd$.

Given a set $\cT$ of $k$ trees $T_1,...,T_k$ we wish to know if there exists a single tree $S$ that displays $T_i$ for all $i \in\{1,...k\}$.
A tree that displays all the input trees, if such a tree exists, is called a \emph{supertree}. When a supertree does exist we call the instance \emph{compatible}, otherwise \emph{incompatible}. A supertree is not necessarily unique. To see when such a tree is unique and many more details on this topic we refer the reader to \cite{DreHub12} or \cite{SemSte03}.

The \emph{display graph} $D(\cT)$ of a set of trees $\cT$ is the graph obtained from the disjoint union of trees in $\cT$ by identifying vertices with the same taxon labels. Note that $D(\cT)$ can be disconnected if and only if the trees in $\cT$ can be bipartitioned into two sets $\cT_1, \cT_2$ such that $X(\cT_1) \cap X(\cT_2) = \emptyset$, where $X(T)$ refers
to the set of taxa of $T$. In such a case $\cT$ permits a supertree if and only if both $\cT_1$ and $\cT_2$ do. Hence for the remainder of the article we focus on the case when $D(\cT)$ is connected.

Before we can start discussing our result we need a few graph theoretic definitions.  Let $G=(V, E)$ be an undirected graph. For any two subsets of vertices $A,B \subseteq V$ and any $Z \subseteq V $
we say $Z$ \emph{separates} sets $A$ and $B$ in $G$ if every path in $G$ that starts at some vertex $u \in A$ and ends at some vertex $v \in B$ contains a vertex
from $Z$. Such a set $Z$ is called an $(A,B)$-\emph{separator}, or simply  a \emph{separator}. A graph $M$ is a \emph{minor} of a graph $G$ if $M$ can be obtained from a subgraph of $G$ by contracting edges.
The treewidth of a graph $G$, denoted $tw(G)$, has a somewhat technical definition and we refer to e.g. \cite{BodlaenderK10} for details. For the main result it is sufficient to note that trees have treewidth 1, and
that graphs with treewidth at most 2 are exactly those graphs that do not have a $K_4$-minor (where $K_4$ is the
complete graph on 4 vertices). We will also use
the well-known fact that if $M$ is a minor of $G$, $tw(M) \leq tw(G)$.
 For remaining graph theory terminology we refer to standard texts such as \cite{diestel2000graph}.



\section{Main results}

We begin with some simple lemmas.

\begin{lemma}
\label{conflict_quar}
\textbf{\emph{\cite[Corollary 1]{Gan02}}}. Let $T_1$ and $T_2$ be two unrooted  phylogenetic trees on the same set of taxa $X$. Then $T_1$ and $T_2$ are compatible if and only if there do not exist four taxa ${a,b,c,d} \subseteq X$ such that $ab|cd \in T_1$ and $ac|bd \in T_2$. 
\end{lemma}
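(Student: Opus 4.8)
The statement is a biconditional, so I would prove the two directions separately and concentrate the genuine difficulty in the converse. For the forward direction (a common supertree forces the absence of a conflicting quartet) I would argue by contraposition on the quartet. Suppose four taxa $a,b,c,d$ satisfy $ab|cd \in T_1$ and $ac|bd \in T_2$, and suppose for contradiction that some tree $S$ displays both $T_1$ and $T_2$. Since displaying is transitive, $S$ would then display both $ab|cd$ and $ac|bd$. But $S|\{a,b,c,d\}$ is a single unrooted tree on four leaves, and such a tree displays at most one of the three quartet topologies $ab|cd$, $ac|bd$, $ad|bc$. As $ab|cd \neq ac|bd$, this is a contradiction, so the instance must be incompatible.

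For the converse I would prove the contrapositive: if $T_1$ and $T_2$ are incompatible then a conflicting quartet exists. The main tool is the translation into splits. A tree $S$ displays both $T_1$ and $T_2$ exactly when it refines both, i.e.\ when the combined split collection $\Sigma(T_1) \cup \Sigma(T_2)$ is realised by $S$; by the Splits-Equivalence Theorem such an $S$ exists if and only if every pair of splits in $\Sigma(T_1) \cup \Sigma(T_2)$ is pairwise compatible. Incompatibility therefore yields two incompatible splits $A_1|B_1$ and $A_2|B_2$. Because all splits induced by a single tree are automatically pairwise compatible, these two splits cannot both come from the same tree, so I may assume $A_1|B_1$ is a split of $T_1$ and $A_2|B_2$ is a split of $T_2$.

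It then remains to read a conflicting quartet off this incompatible pair. Two splits are incompatible precisely when all four intersections $A_1 \cap A_2$, $A_1 \cap B_2$, $B_1 \cap A_2$ and $B_1 \cap B_2$ are nonempty, so I would choose $a \in A_1 \cap A_2$, $b \in A_1 \cap B_2$, $c \in B_1 \cap A_2$ and $d \in B_1 \cap B_2$. These four taxa are distinct, since any two of them are separated by $A_1|B_1$ or by $A_2|B_2$. The edge of $T_1$ inducing $A_1|B_1$ puts $a,b$ on one side and $c,d$ on the other, giving $ab|cd \in T_1$, while the edge of $T_2$ inducing $A_2|B_2$ puts $a,c$ together and $b,d$ together, giving $ac|bd \in T_2$ --- exactly the desired conflicting quartet.

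I expect the crux to be this converse, and specifically the two facts imported from split theory: that the existence of a common supertree is equivalent to pairwise compatibility of $\Sigma(T_1) \cup \Sigma(T_2)$, and that an incompatible pair of splits has all four quadrants nonempty. A secondary point to handle with care is the equivalence between displaying and refining when $T_1$ and $T_2$ share their entire taxon set, together with the observation that the two offending splits must originate in different trees.
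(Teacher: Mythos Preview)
The paper does not supply its own proof of this lemma; it simply quotes the result from \cite[Corollary~1]{Gan02}. Your argument is correct and is essentially the standard derivation: the forward direction is immediate from the fact that a single tree on four leaves realises at most one resolved quartet, and for the converse you correctly invoke the Splits-Equivalence Theorem to obtain an incompatible pair of splits, observe they must lie in different trees, and extract the conflicting quartet from the four nonempty quadrants. Since the paper offers nothing to compare against beyond the citation, there is no divergence to discuss; your proof simply fills in what the paper leaves to the reference.
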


\begin{lemma} 
\label{display_minor}
Let $D$ be the display graph of the two quartets $ab|cd$ and $ac|bd$. Then $D$ has $K_4$ as a minor. Hence, $tw(D) \geq 3$.
\end{lemma}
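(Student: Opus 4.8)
The plan is to build an explicit $K_4$-minor model inside $D$ and then appeal to the minor-monotonicity of treewidth noted in the preliminaries. First I would write $D$ down concretely. The quartet $ab|cd$ has two internal vertices; call $p$ the one adjacent to the leaves $a,b$ and $q$ the one adjacent to $c,d$, so this tree contributes the edges $ap, bp, pq, cq, dq$. Likewise $ac|bd$ has internal vertices $r$ (adjacent to $a,c$) and $s$ (adjacent to $b,d$), contributing $ar, cr, rs, bs, ds$. Since the display graph identifies only the vertices carrying the taxon labels $a,b,c,d$ and leaves the four internal vertices distinct, $D$ has exactly the eight vertices $a,b,c,d,p,q,r,s$ and the ten edges just listed.

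The core step is to exhibit four pairwise disjoint, connected vertex sets (branch sets) that are pairwise joined by an edge; contracting each to a point then realises $K_4$ as a minor. I would take the branch sets $\{a,r\}$, $\{b,p\}$, $\{c,q\}$, $\{d,s\}$, each of which is connected via one of the leaf-edges $ar, bp, cq, ds$, and each of which uses each of the eight vertices exactly once. It then remains to verify that all $\binom{4}{2}=6$ pairs are adjacent in $D$: the set $\{a,r\}$ meets $\{b,p\}$ through $ap$, meets $\{c,q\}$ through $cr$, and meets $\{d,s\}$ through $rs$; the set $\{b,p\}$ meets $\{c,q\}$ through $pq$ and meets $\{d,s\}$ through $bs$; and $\{c,q\}$ meets $\{d,s\}$ through $dq$. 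Since every pair is joined, contracting the four branch sets yields $K_4$, so $K_4$ is a minor of $D$.

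Finally, because the graphs of treewidth at most $2$ are exactly the $K_4$-minor-free graphs (equivalently, treewidth does not increase under taking minors and $tw(K_4)=3$), the existence of this minor gives $tw(D)\geq 3$. The only genuine difficulty is the combinatorial one of choosing the branch sets so that all six adjacencies appear at once while the sets stay disjoint. The obstruction to a naive attempt is that the two ``diagonal'' pairs $\{a,d\}$ and $\{b,c\}$ share no common neighbour in $D$ and are linked only by length-three paths through an internal edge, so a decomposition that merely routes disjoint paths between the four leaves would force these paths to collide. Absorbing one internal vertex into each leaf resolves this cleanly, since it makes the two internal edges $pq$ and $rs$ themselves serve as the two $K_4$-edges corresponding to the diagonals, while the leaf-edges supply the remaining four.
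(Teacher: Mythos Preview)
Your proof is correct and amounts to the same observation as the paper's, just packaged differently. The paper simply notes that each of the four leaves $a,b,c,d$ has degree exactly $2$ in $D$, so suppressing them leaves precisely $K_4$ on the four internal vertices $p,q,r,s$ (in your notation); your explicit branch-set model $\{a,r\},\{b,p\},\{c,q\},\{d,s\}$ is one concrete way to realise that same minor via contractions.
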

\begin{proof}
 Let $Q_1=ab|cd$ and $Q_2=ac|bd$. Both $Q_1$ and $Q_2$ have exactly two inner nodes, denote them $u, v$ and $w,z$ respectively. Then it is immediate to see that vertices $u,v,w,z$ of $D$ form a $K_4$ minor (obtained by suppressing leaves $a,b,c,d$ which all have degree 2 in $D$).
\end{proof}

\begin{theorem}
 \label{thm:2trees}
 Let $T_1$ and $T_2$ be two unrooted phylogenetic trees. Let $D$ be the display graph of $T_1$ and $T_2$. Then $T_1$ and $T_2$ are compatible if and only if $tw(D)\leq 2$.
\end{theorem}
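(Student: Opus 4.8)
The plan is to prove the two implications separately, leaning on the necessity result of Bryant and Lagergren~\cite{BryLag06} for the easy direction and on Lemmas~\ref{conflict_quar} and~\ref{display_minor} for the harder one. For the forward implication (compatibility $\Rightarrow tw(D)\le 2$) I would simply invoke the observation of~\cite{BryLag06} that for $k$ compatible unrooted trees the display graph has treewidth at most $k$; specialising to $k=2$ gives $tw(D)\le 2$ at once. A direct width-$2$ tree decomposition can also be read off from a supertree, but citing~\cite{BryLag06} is cleaner and avoids fiddly bookkeeping.

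For the reverse implication I would argue the contrapositive: assume $T_1$ and $T_2$ are incompatible and deduce $tw(D)\ge 3$. Write $X'=X(T_1)\cap X(T_2)$ for the set of common taxa. Since Lemma~\ref{conflict_quar} is stated only for trees on a common taxon set, the first step is a reduction: $T_1$ and $T_2$ are compatible if and only if $T_1|X'$ and $T_2|X'$ are compatible. The nontrivial direction here is that any supertree for the restrictions can be extended by grafting back the private (non-shared) taxa of each tree, which introduces no new conflict because each private taxon occurs in only one of the two trees. Consequently, incompatibility of $T_1,T_2$ forces incompatibility of $T_1|X'$ and $T_2|X'$, which are two trees on the same taxon set $X'$. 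Lemma~\ref{conflict_quar} then yields four taxa $a,b,c,d\in X'$ with $ab|cd\in T_1$ and $ac|bd\in T_2$.

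It then remains to lift the $K_4$-minor supplied by Lemma~\ref{display_minor} from the display graph $D'$ of these two conflicting quartets up to $D$ itself. The crux is that $D'$ is a minor of $D$. Indeed, for each $i$ the quartet is exactly $T_i|\{a,b,c,d\}$, obtained from $T_i$ by deleting everything off the minimal subtree spanning $\{a,b,c,d\}$ and contracting the resulting degree-$2$ vertices, all of which are legitimate minor operations. Because $a,b,c,d$ lie in $X'$, they are precisely the identified glue vertices of $D$; performing the reductions inside the $T_1$-part and the $T_2$-part simultaneously, while leaving $a,b,c,d$ intact, produces exactly $D'$. By transitivity of the minor relation $D$ then contains a $K_4$-minor, so $tw(D)\ge 3$, which establishes the contrapositive and hence the implication $tw(D)\le 2 \Rightarrow$ compatible.

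The headline lemmas do most of the conceptual work, so I expect the real obstacles to be the two bookkeeping steps. The first is justifying the common-taxa reduction, namely that the private taxa can always be grafted back onto a supertree of the restrictions without manufacturing a conflicting quartet. The second, and in my view the subtler point to get exactly right, is verifying that $D'$ genuinely occurs as a minor of $D$: one must check that restricting each $T_i$ to $\{a,b,c,d\}$ is realisable by minor operations that respect the identification of the shared leaves $a,b,c,d$, so that the two reduced quartets are glued together at precisely those four vertices and reproduce the display graph of Lemma~\ref{display_minor}.
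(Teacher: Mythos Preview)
Your proposal is correct and follows essentially the same route as the paper's own proof: both directions rely on the same ingredients (Bryant--Lagergren for compatibility $\Rightarrow tw(D)\le 2$; the common-taxa reduction, Lemma~\ref{conflict_quar}, the minor relation, and Lemma~\ref{display_minor} for the converse), with your contrapositive framing being logically equivalent to the paper's proof by contradiction. The only cosmetic difference is that the paper dispatches the common-taxa reduction by citing \cite{SemSte03} rather than sketching the grafting argument, and is slightly terser about why the quartet display graph is a minor of $D$.
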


\begin{proof}
 Let $T_1$ and $T_2$ be two trees on taxa sets $X$ and $X'$ respectively. Let $X^* = X \cap X'$. Then $T_1$ and $T_2$ are compatible if and only if $T_1|X^*$ and $T_2|X^*$ are compatible \cite{SemSte03}. Thus we only have to consider two  trees $T_1$ and $T_2$ on the same set of taxa $X$. Let $D(T_1,T_2)$ be their display graph. Suppose for the sake of contradiction that $tw(D(T_1,T_2)) \leq 2$ while $T_1$ and $T_2$ are incompatible. From Lemma \ref{conflict_quar}, $T_1$ and $T_2$ contain incompatible quartets $Q_1$ and $Q_2$ (w.l.o.g. let $T_i$ display $Q_i$) and since $Q_i$ is displayed in $T_i$, it is also displayed in $D(T_1,T_2)$, so $D(Q_1,Q_2)$ is a minor of $D(T_1,T_2)$. 
Since $D(Q_1,Q_2)$ is a minor of $G$, and using Lemma \ref{display_minor}, $tw(G)\geq tw(D(Q_1,Q_2))\geq 3$,
contradicting the fact that $tw(D(T_1,T_2)) \leq 2$. This completes our proof in one direction; for the other see \cite{BryLag06}. 
\end{proof}

In the following main theorem we emphasize that the trees in $\cT$ do not need to be on the same set of taxa, but that for this proof the input trees do need to be binary.
\newpage

\begin{theorem} \label{ktrees}
 Let $\cT$ be a set of $k$ binary unrooted phylogenetic trees $T_1, ..., T_k$ and let $D$ be their display graph. If
$tw(D) \leq 2$, then $T_1, ..., T_k$ are compatible, in which case a supertree can be constructed in polynomial time.
\end{theorem}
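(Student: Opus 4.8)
The plan is to prove compatibility by induction on the number of taxa, carrying along the invariant that the display graph has treewidth at most $2$, and to keep the construction of the supertree explicit so that the polynomial running time falls out of the induction. Two structural observations drive the argument. First, because only leaves (taxa) are identified when forming $D$, every internal node of every $T_i$ retains degree exactly $3$ in $D$; hence any vertex of degree at most $2$ must be a taxon, and its degree equals the number of input trees containing it. Second, graphs of treewidth at most $2$ are $2$-degenerate, so a nontrivial $D$ always contains a vertex of degree at most $2$. Combining the two, there is always a taxon $x$ occurring in at most two of the trees. I will also record, via Theorem~\ref{thm:2trees} together with the fact that $D(T_i,T_j)$ is a subgraph of $D$ (its edges being exactly $E(T_i)\cup E(T_j)$), that the input trees are \emph{pairwise} compatible; this is used repeatedly but is, on its own, not enough to conclude joint compatibility.

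First I would dispose of the case where the chosen taxon $x$ has degree $1$, i.e.\ it occurs in a single tree $T_i$. Deleting $x$ from $T_i$ and suppressing the resulting degree-$2$ vertex produces a smaller instance whose display graph $D'$ is a minor of $D$, so $tw(D')\le 2$ and the inductive hypothesis yields a supertree $S'$. Since $S'$ displays $T_i$ with $x$ removed, the edge of $T_i$ to which $x$ was attached corresponds to a path in $S'$; subdividing an appropriate edge of that path and hanging $x$ there restores a supertree that displays $T_i$, and leaves all other trees (none of which contain $x$) untouched. This step is routine and contributes only polynomial overhead.

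The main obstacle is the case where $x$ has degree $2$, i.e.\ it is shared by exactly two trees $T_i$ and $T_j$ and by no others. Here $x$ sits on a path $u - x - w$ in $D$, with $u \in T_i$ and $w \in T_j$ internal. Each of the two trees imposes a placement constraint on $x$ (the split of $X_i\setminus\{x\}$ seen at $u$, and the split of $X_j\setminus\{x\}$ seen at $w$), and the difficulty is to satisfy these simultaneously together with the demands of all the remaining trees. My plan is to exploit the separator structure forced by $tw(D)\le 2$: the pair $\{u,w\}$ (equivalently $x$ together with a neighbouring node) acts as a separator of size at most $2$, and I would argue that the portion of $D$ reachable through it interacts with the rest only across this small separator. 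One then merges $T_i$ and $T_j$ along their shared taxa into a single tree $T_{ij}$, chosen \emph{minimally} so that the resulting display graph remains a minor of $D$ and hence still has treewidth at most $2$; recursion on the instance with one fewer tree completes the step. The heart of the proof, and the place I expect the genuine work to lie, is showing that such a merge can always be carried out without creating a $K_4$-minor: any failure of consistency among the placement constraints would have to manifest as two conflicting quartets and therefore, as in Lemma~\ref{display_minor}, as a forbidden $K_4$-minor, contradicting $tw(D)\le 2$. Making this contradiction precise in the presence of the separator, and tracking exactly which subtrees hang off each side, is the delicate and (as the authors warn) surprisingly involved part.

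Finally, for the algorithmic claim I would observe that every ingredient is polynomial: recognizing $tw(D)\le 2$ and extracting the relevant small separators is polynomial (treewidth at most $2$ is equivalent to the absence of a $K_4$-minor, and the associated series--parallel structure can be found efficiently), locating a taxon of degree at most $2$ is immediate, and each reduction strictly shrinks the instance. Since there are only polynomially many reduction steps and each reinsertion or merge is performed on the current supertree in polynomial time, the overall construction of $S$ runs in polynomial time. Correctness follows by the induction, as each step is shown to preserve the property of being a supertree of the trees reduced so far.
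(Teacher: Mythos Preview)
Your approach differs substantially from the paper's. The paper does not use $2$-degeneracy to locate a low-degree taxon; instead, after a cleaning-up phase, it takes a planar embedding of $D$ and finds two \emph{minimally adjacent} interior faces $F_1,F_2$. Their shared boundary path $p(u,v)$ furnishes a separator $\{u,v\}$ that cuts $D$ into two nontrivial taxon sets $X_1,X_2$, and the induction (on $|X|$ throughout) builds supertrees on each side and glues them across the separator. A case analysis on whether $u,v$ are taxa or internal nodes, and on whether $p(u,v)$ contains a taxon, shows that every input tree respects the bipartition $X_1|X_2$ and that all quartets meeting the separator are displayed after gluing; the $K_4$-minor argument is used there to certify that $\{u,v\}$ really separates.

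Your proposal has a genuine gap in the degree-$2$ case, and it is exactly the step you yourself flag as delicate. Locating a taxon $x$ with neighbours $u\in T_i$, $w\in T_j$ does not give a useful separator: removing $\{u,w\}$ isolates $x$, but there is no reason it cuts $D$ into two substantial pieces, so the ``separator structure forced by $tw(D)\le 2$'' you invoke is not available from this data alone. More seriously, the merge is not justified. You need a supertree $T_{ij}$ of $T_i,T_j$ together with a guarantee that replacing $T_i,T_j$ by $T_{ij}$ yields a display graph that is still a minor of $D$ (so that $tw\le 2$ persists for the recursion). Pairwise compatibility gives \emph{some} $T_{ij}$, but when $|X_i\cap X_j|\ge 3$ there can be several non-isomorphic binary supertrees of $T_i,T_j$, and nothing in your argument selects one whose display graph with the remaining trees embeds as a minor of $D$; the correct choice may well depend on $T_1,\ldots,T_k$ globally. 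The Lemma~\ref{display_minor}-style contradiction you sketch only re-establishes pairwise compatibility and does not touch this embedding requirement. Finally, your stated induction is on $|X|$, but the merge decreases $k$, not $|X|$, so the induction scheme itself needs repair. In short, the step you hoped would be ``surprisingly involved'' is in fact the entire content of the theorem, and the paper handles it by a quite different, face-based decomposition rather than by merging trees.
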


\begin{proof}

We give a constructive proof in which we will build a supertree $S$ for $\cT$. The idea is to find an appropriate separator of $D$ and to reduce the problem into smaller instances of the same problem i.e. an induction proof. The
induction will be on the cardinality of $X = \cup_{T_i \in \mathcal{T}} X(T_i)$. For the base case observe that
an instance with $|X| \leq 3$ is trivially compatible. 


Before we start the construction we apply a number of operations on $D$ that are safe to do, in sense that they preserve (in)compatibility of the instance and do not cause the treewidth of $D$ to rise. We remove any taxon that has degree 1 in $D$ and contract any inner vertex that has degree 2 in $D$. This clearly affects neither the compatibility nor the treewidth. Furthermore, whenever we encounter a tree with strictly less than 4 taxa, we can remove it from $D$. Such a tree carries no topological information and thus does not change the compatibility, while removing something from a graph cannot increase its treewidth. The \emph{cleaning up} procedure means that we apply all these operations on $D$ repeatedly until we cannot apply any anymore. In other words, we can assume $D$ to have treewidth exactly 2, that all inner vertices of $D$ have degree 3, that all taxa have degree at least 2 and that no tree has fewer than 4 taxa.

Consider a planar embedding of the display graph $D(\cT)$. This exists and can be found in polynomial time because $D(\cT)$ is planar. The \emph{boundary} of a face $F$ of $D(\cT)$, denoted $B(F)$, is the set of edges and vertices that are incident to the interior of the face. We say that two distinct faces $F_1, F_2$ are \emph{minimally adjacent} if the following three conditions hold: (1) $F_1$ and $F_2$ are adjacent; (2) $B(F_1) \cap B(F_2)$ is isomorphic to a path containing at least one edge; (3) the internal vertices of the path $B(F_1) \cap B(F_2)$ all have degree 2 in $D(\cT)$, and the two endpoints of the path each have degree 3 or higher in $D(\cT)$. See the appendix for a proof that if the treewidth of $D$ is 2 we can always find two such faces, neither
equal to the outer face, in polynomial time.

Let $F_1$ and $F_2$ be two minimally adjacent faces of $D$, neither equal to the outer face. Denote by $p(u,v)$ the path $B(F_1) \cap B(F_2)$ they share. (After locating $F_1$ and $F_2$ this path can easily be found in polynomial time). By definition $u$ and $v$ must have degree at least 3 in $D$. Also, by minimal adjacency of $F_1$ and $F_2$ and due to cleaning up, none of the interior nodes of $p(u,v)$ can be internal tree nodes. Moreover, since we removed all trees on fewer than four taxa, at most one leaf can appear as an interior node of the path. Such a leaf can only exist if both $u$ and $v$ are inner nodes of some trees. Now, $u$ and $v$ can either be both leaves, both inner nodes or one of them a leaf another an inner node. These are the three cases we have to consider.\\
\\
\textbf{Case(i)} is when both $u$ and $v$ are leaves. We claim this cannot happen. In this case, path $p(u,v)$ must be an edge. But if it is an edge it is connecting two leaves and will have already been removed during cleaning up.\\
\\
\textbf{Case(ii)} is when $u$ is a leaf and $v$ is an inner node. Again we have that path $p(u,v)$ must be an edge $(u,v)$ which both faces share. Let $x$, respectively $y$, be any vertex other than $u$ or $v$ on the boundary of $F_1$, respectively $F_2$. See Figure \ref{fig:K4separator}(a). We claim that any path between $x$ and $y$ must contain either $u$ or $v$. In particular, suppose there exists a path $p(x,y)$ such that $u,v \notin p(x,y)$. Let $x'$ and $y'$ be vertices on $p(x,y)$ such that the subpath $p(x',y')$ is the shortest subpath of $p(x,y)$ with the property that both of its endpoints are on the boundaries of $F_1$ and $F_2$, respectively. See Figure \ref{fig:K4separator}(a). Then $D$ contains a $K_4$ minor formed by vertices $u, v, x'$ and $y'$. 
This is a contradiction on $D$ having treewidth 2. So we have that any path between $x$ and $y$ passes through either $u$ or $v$. Thus $\{u,v\}$ is a separator of $D$. 

\begin{figure}[h]
 \begin{center}
  \includegraphics[width=10cm]{./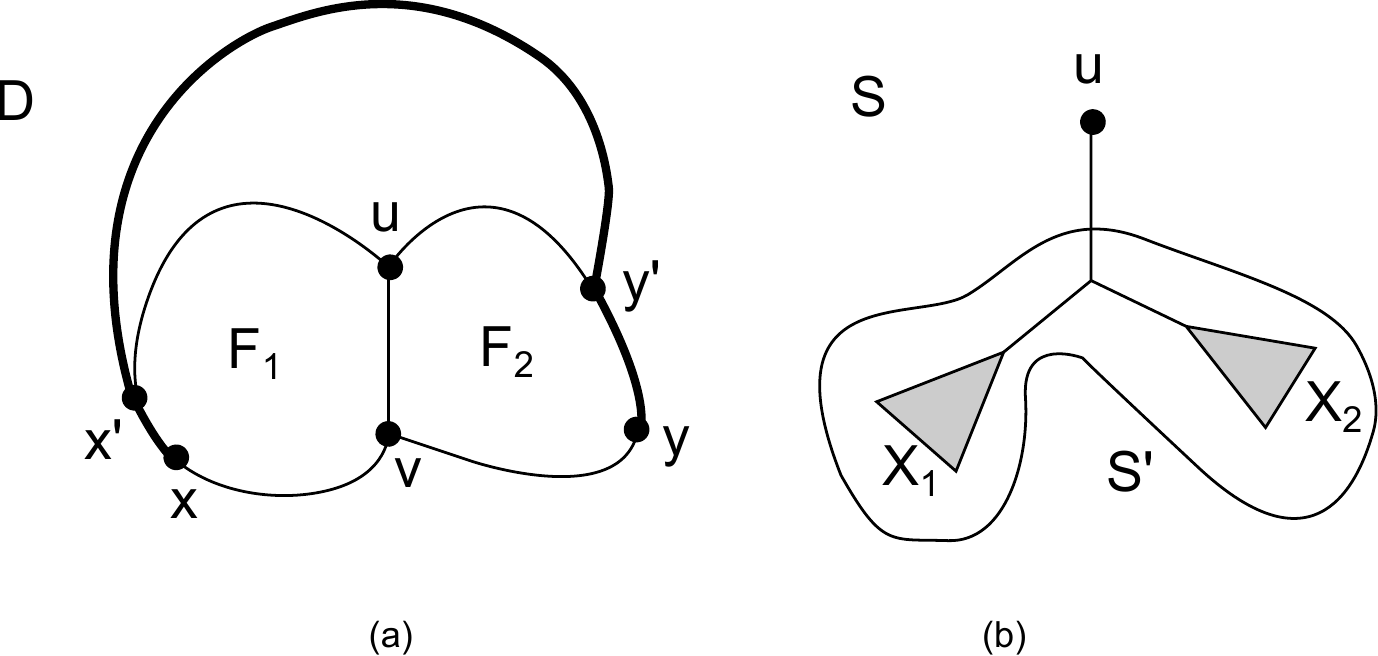}
 \end{center} 
 \caption{(a) Two minimally adjacent faces $F_1$ and $F_2$ in $D$. The vertices $u,v,x',y'$ induce a $K_4$ minor. (b) A supertree as constructed in case (ii).}
 \label{fig:K4separator}
\end{figure}

Removing $u$ and $v$ from the vertex set of $D$ disconnects it and divides the set of taxa into two sets $X_1$ and $X_2$, such that $X =  X_1 \cup X_2 \cup \{u\}$. 
We claim that supertree $S$ as shown in Figure \ref{fig:K4separator}(b), where $S'$ is a supertree of $T_1,...,T_k$ restricted to taxa set $X\setminus \{u\}$, displays all $k$ input trees $T_1,...,T_k$. To prove this we have to show two things. One, that the supertree $S'$ exists (and that it has an edge corresponding to split $X_1|X_2$) and two, that all quartets in $T_1,...,T_k$ are also in $S$. (The latter is sufficient because a set of unrooted trees is compatible if and only if the set of quartets displayed by the
trees is compatible).

To prove the first claim let $X':=X\setminus \{u\}$ and notice that by induction the instance $T_1,...,T_k|X'$ is compatible and thus has a supertree. We now claim that there exists some supertree of $T_1,...,T_k|X'$, call it $S'$, which contains split $X_1|X_2$. First of all notice that (a restriction of) $X_1|X_2$ must  be a split in every input tree restricted to $X'$. To see this we show that there does not exist a quartet $ab|cd$ with $a,c \in X_1$ and $b,d \in X_2$ in any of the input trees (prior to removal of $u$ and $v$). Suppose such a quartet did exist in some tree.
 Then there would exist edge-disjoint paths $p(a,b)$ and $p(c,d)$ in $D$, where the interior nodes of these paths are internal tree nodes. Since removing $u$ and $v$ from $D$ disconnects it (such that $X_1$ and $X_2$ are subsequently in separate components), it must be that those paths had to use either $u$ or $v$. Since $u$ is a taxon it cannot be used for this purpose. So both paths had to use inner vertex $v$. However, this contradicts the edge-disjointness of the two paths. Hence quartet $ab|cd$ cannot be displayed by any tree.

We conclude from this that in each $T_i|X'$ there exists an edge $e$ that induces a split $A|B$, such that $A \subseteq X_1$ and $B \subseteq X_2$. 
Furthermore both $X_1$ and $X_2$ must contain at least one taxon each. (This follows because edge $(u,v)$ belongs to some input tree $T$, and walking from $u$ to $v$ along the boundary of $F_1$ whilst avoiding edge $(u,v)$ necessitates
entering and leaving $T$ via its taxa, which in turn means that some taxon not equal to $u$ must exist on the part of the boundary of $F_1$ not shared by $F_2$. The same argument holds for $F_2$.) 
As such, in each $T_i|X'$ it is possible to contract (the subtree induced by) $X_1$ and/or $X_2$ into a single ``meta-taxon''.

Let $T^*$ (respectively, $T^{**}$) be the set of trees obtained by taking the trees on $X'$ and contracting all the $X_2$ (respectively, $X_1$) taxa into a single meta-taxon $W_2$ (respectively, $W_1$).
Note that contracting in this way cannot increase the treewidth of $D$ and that $1 \leq |X_i| < |X|$ for $i\in\{1,2\}$. Hence, by induction supertrees of $T^*$ and $T^{**}$ exist. Finally, construct supertree $S'$ with split $X_1|X_2$ from two supertrees for $T^*$ and $T^{**}$ by adding an edge between $W_1$ and $W_2$ and afterwards suppressing $W_1$ and $W_2$. (The function of $W_1$ and $W_2$ was precisely to ensure that we would know how to glue the two separately constructed supertrees together).

To see the second claim note that since $S'$ is a supertree of $T_1,...,T_k$ restricted $X\setminus \{u\}$ we only have to show that quartets of $T_1,...,T_k$ that contain taxon $u$ are displayed by $S$. So w.l.o.g. let $a \in X_1, b,c \in X_2$. Then if quartet $au|bc$ is displayed by some input tree $T$ it is also clearly displayed by the supertree $S$. We claim quartets $ub|ac$ or $uc|ab$ cannot exist in any of the input trees. These two quartets are the same up to relabeling so let's consider quartet $ub|ac$ induced by some tree $T$ sitting inside $D$. Then $p(u,b)$ and $p(a,c)$ are edge-disjoint and contain no taxa. As argued before $p(a,c)$ must pass through $v$. But since $(u,v)$ is an edge it follows that it must belong to the same tree $T$, and therefore $v$
also lies on the path $p(u,b)$. But then it is not possible that $T$ displays $ub|ac$, contradiction.



\noindent
\textbf{Case(iii)} is when both $u$ and $v$ are inner nodes. We could have that $p(u,v)$ is an edge, in which case $u$ and $v$ are inner nodes of the same tree, or we could have that $p(u,v)$ contains a single taxon $t$. Note that in the latter case $u$ and $v$ are inner nodes of two different trees and taxon $t$ must have degree 2 in $D$ due to the minimal adjacency of $F_1$ and $F_2$. 
The argument for $\{u,v\}$ being a separator of $D$ goes through in this case as well regardless of $p(u,v)$ being an edge or a path containing a single taxon $t$. We again denote by $X_1$ and $X_2$ the two sets of taxa that emerge from splitting $D$ by removing $u$ and $v$ (and $t$ if it exists on $(u,v)$). 

\begin{figure}[h]
 \begin{center}
  \includegraphics[width=10cm]{./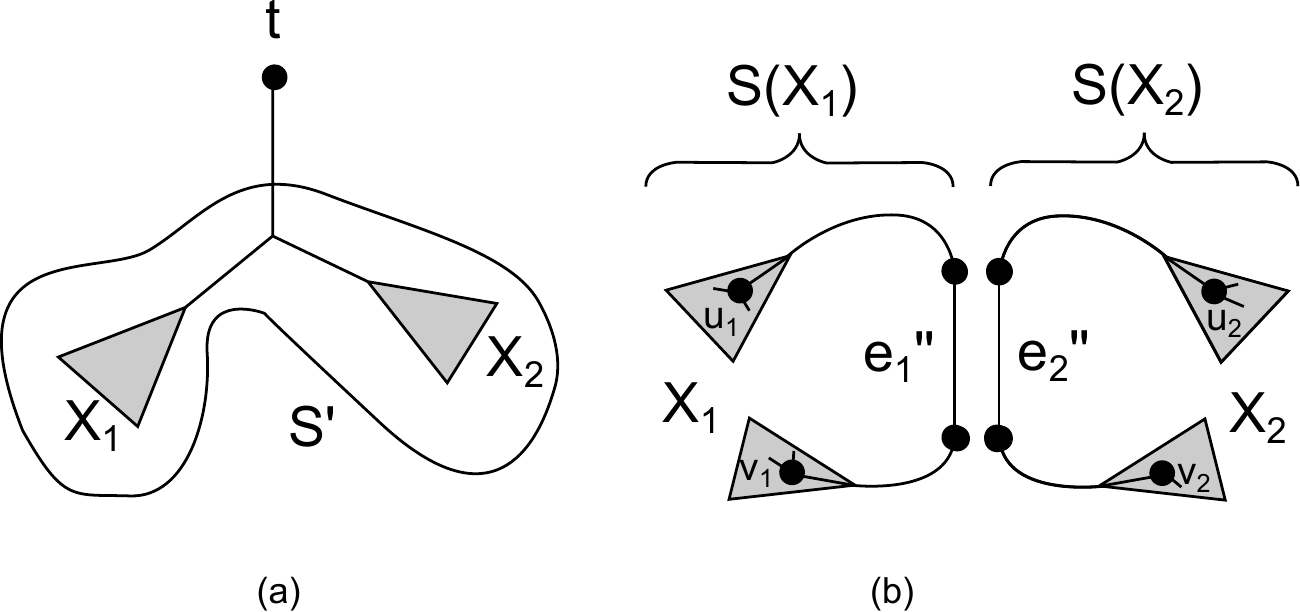}
 \end{center} 
 \caption{(a) A supertree constructed in case (iii) when there exists a taxon $t$ on the common boundary of the two faces. (b) Construction of a supertree in case (iii) when the common boundary of the two faces is a single edge.}
 \label{fig:K4separatorCase3}
\end{figure}

\textbf{Subcase 1}. Consider first the subcase when some taxon $t \in p(u,v)$. As before we have to show that  there exists some $S'$, a supertree of $T_1,...,T_k$ restricted to $X':= X \setminus \{t\}$ with split $X_1|X_2$, and that the supertree $S$ as shown in Figure \ref{fig:K4separatorCase3}(a) displays all quartets induced by $T_1,...,T_k$. 

The proof of the first part (i.e. that a suitable $S'$ exists) is almost exactly the same as in case (ii). There were three places in which we used the fact that $u$ was a leaf. We now show that those statements hold also when $u$ is an inner node and $t$ is a taxon on path $p(u,v)$. We saw that in case (ii) both $X_1$ and $X_2$ were nonempty and of strictly smaller cardinality than $X$. That they are strictly smaller than $X$ is the case here as well since $t\in X$ but $t \notin X_1$ and $t \notin X_2$. That they are nonempty in this case also holds. Consider face $F_1$. Let $u$ be a node of some tree $T_1$ and $v$ a node of some other tree $T_2$. Then the path from $u$ to $v$ that follows the part of the boundary $F_1$ not shared by $F_2$, is a path between two vertices of different trees and so must contain some taxon $a \neq t$ (which is also in $X_1$). The same argument holds for $F_2$ and $X_2$.  Another thing we have to show here is that all input trees $T_1,...,T_k$ restricted to $X'$ respect split $X_1|X_2$
(i.e. each tree contains an edge $e$ inducing a split $A|B$ where $A \subseteq X_1$ and $B \subseteq X_2$). Suppose there exists a quartet $ab|cd$ with $a,c \in X_1$ and $b,d \in X_2$ in some input tree $T$. Since removing $u$ and $v$ from $D$ disconnects it, it must be that two paths $p(a,b)$ and $p(c,d)$ had to pass through either $u$ or $v$. W.l.o.g. let $u \in p(a,c)$ and $v \in p(b,d)$. Furthermore paths $p(a,b)$ and $p(c,d)$ are edge-disjoint in $D$ and belong to the same tree $T$. But this is impossible since $u$ and $v$ belong to different trees in this case. Contradiction. So we conclude in this case too that every input tree restricted to $X'$ respects split $X_1|X_2$, and hence the contraction of $X_1$ and $X_2$ into meta-taxa works exactly
as described in case (ii). Hence, $S'$ indeed exists, can be constructed and has split $X_1|X_2$.

Next, we claim that $S$ as shown in Figure \ref{fig:K4separatorCase3}(a) displays all quartets induced by $T_1,...,T_k$. As before, since $S'$ is a supertree of $T_1,...,T_k|X'$ we only need to check the quartets induced by $T_1,...,T_k$ that contain taxon $t$. W.l.o.g. let $a \in X_1$ and $b,c \in X_2$. There are three possible topologies $at|bc, bt|ac, ct|ab$. As before, $at|bc$ is an easy case since if it appears in some $T$ in $D$ it clearly also appears in $S$, while topologies $bt|ac$ and $ct|ab$ are the same up to relabeling. So consider $bt|ac$ and suppose it is displayed by some tree $T$. Paths $p(b,t)$ and $p(a,c)$ are edge-disjoint and since the degree of $t$ is 2 in $D$ we have that $p(b,t)$ has to contain either $u$ or $v$. W.l.o.g. let $u \in p(b,t)$. Now, since $\{u,v\}$ is a separator of $D$ and $a \in X_1$ while $c \in X_2$, we have that either $u \in p(a,c)$ or $v \in p(a,c)$. If $u \in p(a,c)$ we have that paths $p(b,t)$ and $p(a,c)$ both contain node $u$, a contradiction on $bt|ac$ being displayed by $T$. If $v \in p(a,c)$ then edge $(v,t)$ and $(u,t)$ must both belong to the same tree $T$, which contradicts our earlier observation that $u$ and $v$ are necessarily in different trees.

\textbf{Subcase 2}. The last thing to consider is the subcase when $(u,v)$ is an edge while both $u$ and $v$ are inner nodes (necessarily of the same tree $T$). Let $X_1$ and $X_2$ be two disjoint sets of taxa that result from splitting $D$ after removing $u$ and $v$. We claim that $|X_1| \geq 2$ and $|X_2| \geq 2$. This follows directly from $u,v \in T$: any cycle that links them together must leave the tree $T$ via some taxon $a$ and re-enter it via a (necessarily different) taxon $b$. Since $u$ and $v$ belong to both faces $F_1$ and $F_2$ it follows that the boundaries of these two faces must each contain (at least) two taxa. The two taxa on the boundary of (w.l.o.g) $F_1$ are still in the same connected component after deletion of $\{u,v\}$, but are not in
the same connected component as the taxa from the boundary of $F_2$, so $|X_1| \geq 2$ and $|X_2| \geq 2$.

Now we claim that the tree shown in Figure \ref{fig:K4separatorCase3}(b) is a supertree of $T_1,...,T_k$. Let's first explain what that image means. Note that apart from the tree $T$ in which the internal edge $e=(u,v)$ can be found, all other trees have taxa sets either completely contained inside $X_1$ or completely contained inside $X_2$. This is the case because otherwise there would be a path from some element in $X_1$ to some element in $X_2$, contradicting the fact that $\{u,v\}$ is a separator. The idea is to cut $T$ into two parts, one on $X_1$, one on $X_2$, recursively build supertrees of $T_1,...,T_k|X_1$ and $T_1,...,T_k|X_2$ and join them as indicated in the figure.

Now, consider the display graph $D$. Suppose we delete the edge $e=(u,v) \in T$, and replace it with two edges $e_1=(u_1,v_1)$ and $e_2=(u_2,v_2)$ (where $u_i$ and $v_i$ are $u$ and $v$ duplicated). Because $\{u,v\}$ is a separator, this creates two disjoint display graphs, one on $X_1$ and one on $X_2$. These are minors of the original display graph so have treewidth at most 2, and they are smaller instances of the problem. So by induction supertrees of these smaller instances exist. Let $S(X_1)$ be a supertree on $X_1$ and $S(X_2)$ be a supertree on $X_2$. All trees except $T$ will be displayed by the disjoint union of $S(X_1)$ and $S(X_2)$, because only $T$ has taxa from both $X_1$ and $X_2$. What is left to explain is how to glue $S(X_1)$ and $S(X_2)$ into a supertree $S$ such that $S$ displays $T$ as well. 

Note that $S(X_i)$ contains an image of edge $e_i$. The image need not be an edge in $S(X_i)$, it could also be a path, whose endpoint we denote by $u_i$ and $v_i$ in Figure \ref{fig:K4separatorCase3}(b). Take any edge on path $p(u_i, v_i)$, call it $e'_i$, and subdivide it twice to create two adjacent degree-2 vertices; let $e''_i$ be the edge between them. Now, by identifying $e''_1$ and $e''_2$ we ensure that we get a supertree that displays (all the quartets in) $T$, as well as all the other trees.

This completes the case analysis. Polynomial time is achieved because all relevant operations (recognizing whether a graph has treewidth at most 2, finding a planar embedding, finding two minimally adjacent faces, finding the separator $\{u,v\}$, and
all the various tree manipulation operations) can easily be performed in (low-order) polynomial time.


\end{proof}

\section{Beyond treewidth 2}

Two incompatible quartets induce a display graph with treewidth 3, so treewidth 3 cannot guarantee compatibility. However, it is natural to ask whether treewidth 3 guarantees compatibility if the number of input trees becomes sufficiently large.
Unfortunately, the answer to that question is no. Namely, for any number of trees there exists a compatible instance with $tw(D)=3$ and an incompatible instance with $tw(D)=3$, as we now demonstrate. Figure \ref{fig:gray_comp_tw3} shows the display graph of $k$ trees with leaves denoted as black dots and vertices of $K_4$ minors with red dots (note that some leaves, for example $z$, can also be a vertex of a $K_4$ minor). Note that vertices $a,b,c,z$ form a $K_4$ minor in $D(T_1,T_2,T_3)$, vertices $b,c,d,q$ form a $K_4$ minor in $D(T_2,T_3,T_4)$, vertices $d,e,f,s$ form a $K_4$ minor in $D(T_4,T_5,T_6)$ and so on. Now note that all those $K_4$ minors are attached together by a sequence of series and parallel compositions inside $D(T_1,...,T_k)$. So we can conclude that the treewidth of the display graph of $k$ trees as shown in figure \ref{fig:gray_comp_tw3} is 3. (Equivalently, we can describe a tree decomposition in which all bags have size at most 4). Compatibility of this instance can be
verified without too much difficulty (details omitted).

Now we need to show the same for an incompatible instance. In Figure \ref{fig:gray_incomp_tw3} trees $T_1,T_2,T_3$ are incompatible, thus the whole instance is incompatible. Furthermore, trees $T_4,...,T_k$ are chosen to be the same as in Figure \ref{fig:gray_comp_tw3}, so are compatible and $tw(D(T_4,...,T_k))=3$. We have verified that $tw(D(T_1,T_2,T_3))=3$. Since $D(T_1,T_2,T_3)$ and $D(T_4,...,T_k)$ are attached in series to form $D(T_1,...,T_k)$ we conclude $tw(D(T_1,...,T_k))=3$. 

It is not difficult to generalize these constructions for any treewidth higher than 3, and any number of trees.


\begin{figure}[h]
	\centering
		\includegraphics[width=0.8\textwidth]{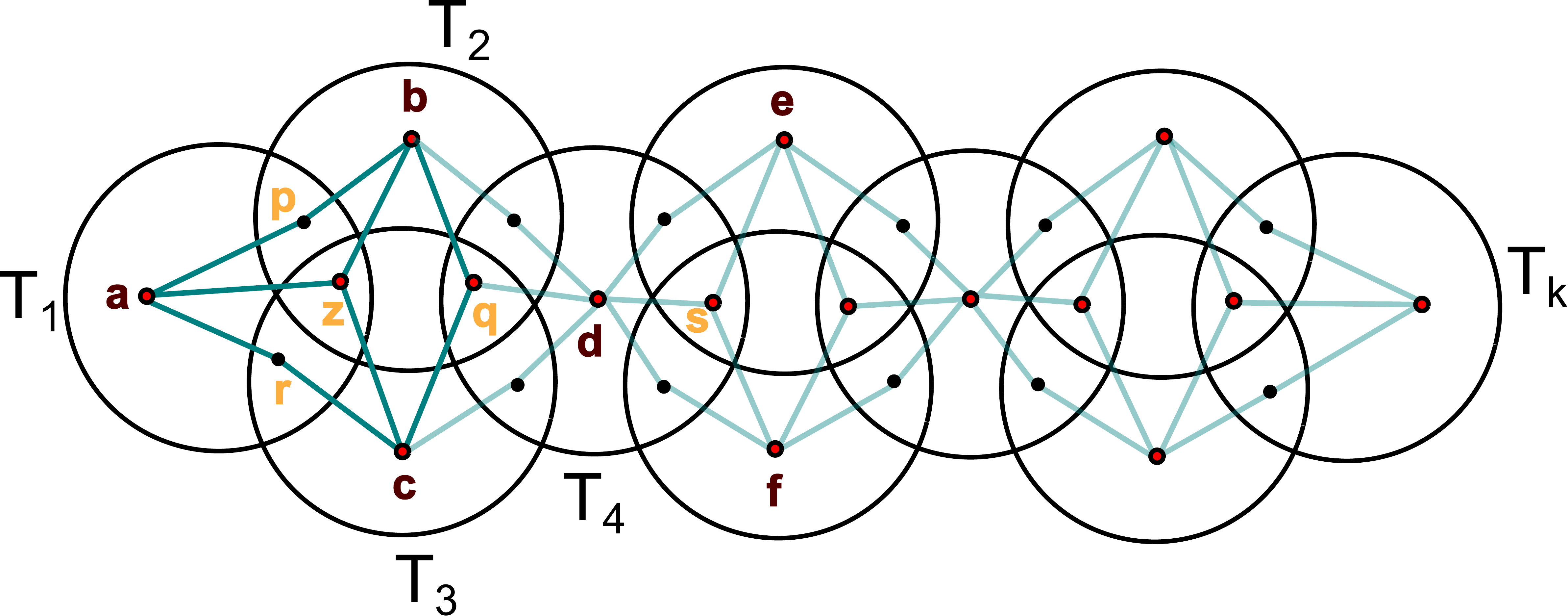}
	\caption{Display graph of an instance with $k$ input trees. Red (larger) vertices are inner nodes while black (smaller) vertices are leaves. The treewidth of $D$ is 3 and the instance is compatible.}
	\label{fig:gray_comp_tw3}
\end{figure}

\begin{figure}[h]
	\centering
		\includegraphics[width=0.8\textwidth]{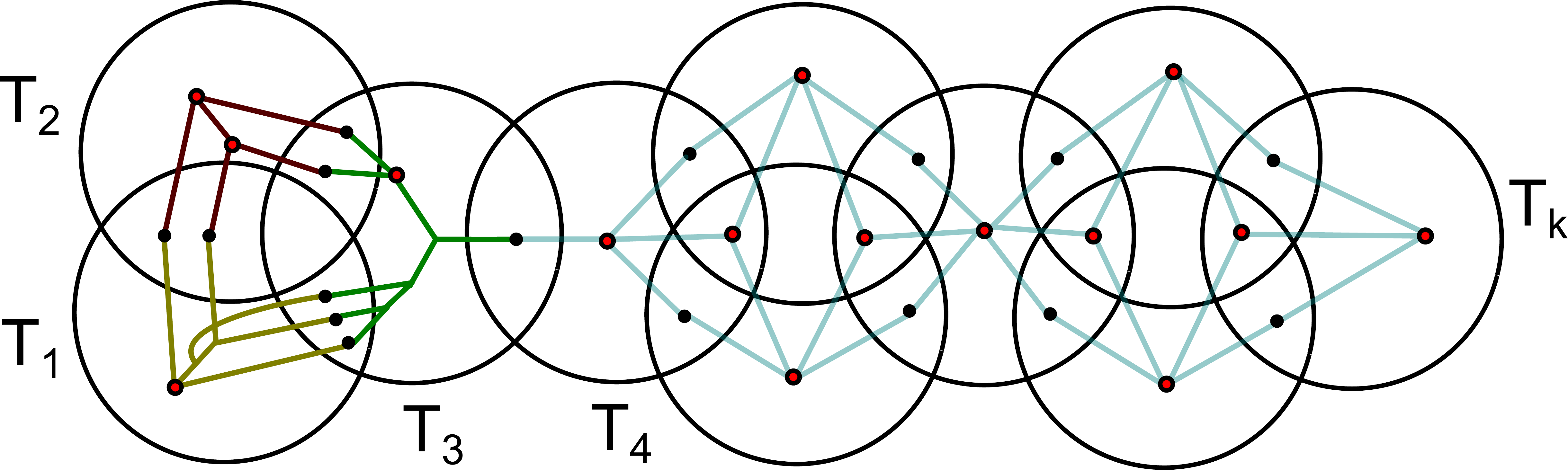}
	\caption{Display graph of an instance with $k$ input trees. The treewidth of $D$ is 3 and the instance is incompatible.}
	\label{fig:gray_incomp_tw3}
\end{figure}



\section{Conclusion}

\begin{figure}[h]
	\centering
		\includegraphics[width=0.40\textwidth]{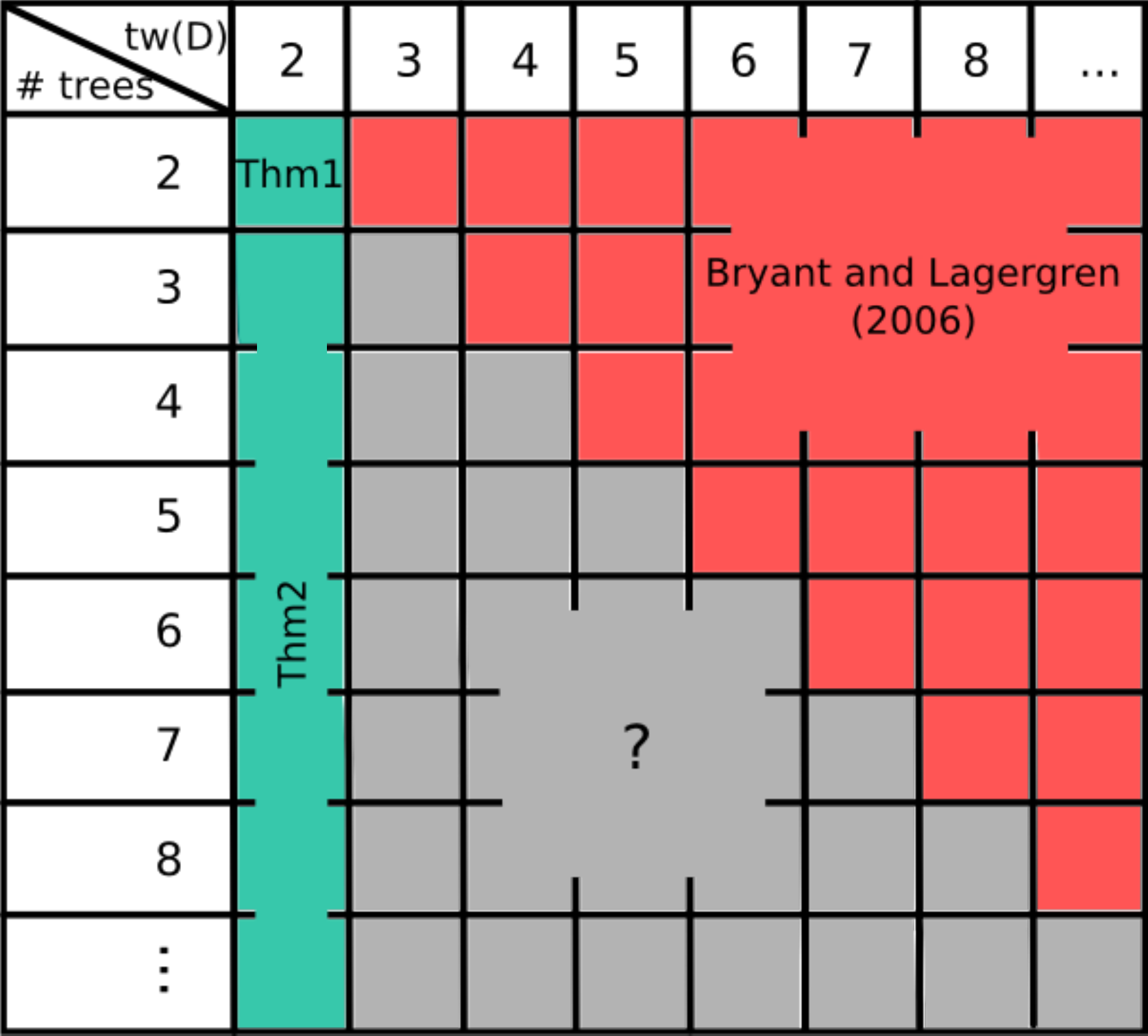}
	\caption{The green (respectively, red) area shows which combinations of  (number of input trees, treewidth of display graph) are always compatible (respectively, incompatible). The grey area indicates that both compatible and incompatible instances exist for this combination of parameters.}
	\label{fig:table}
\end{figure}

Figure \ref{fig:table} summarizes our results. The red area is due to result of Bryant and Lagergren which proves that that any instance on $k$ trees whose display graph has treewidth strictly greater than $k$ must be incompatible. The green area is due to our result. What we are left with is the grey area in which (as demonstrated by the constructions in the previous section) we cannot conclude anything about compatibility of the instances based only on treewidth of the display graph and the number of trees, at least not with the current results. 
An obvious open questions is whether existing
characterizations (such as legal triangulations \cite{vakati2011graph}) can be specialized to yield simple and efficient combinatorial algorithms in the case of treewidth 3 or higher.


\bibliographystyle{plain}
\bibliography{compatibility}

\appendix

\section{Existence of minimally adjacent faces}


\begin{observation}
\label{obs:adj}
Let $\cT$ be a cleaned up, non-empty set of unrooted binary trees on $X$ such that $tw(D(\cT)) = 2$. Consider any planar embedding of $D(\cT)$. Then there exist two distinct faces $F_1, F_2$ in $D(\cT)$ such that $F_1$ and $F_2$ are adajcent and neither is equal to the outer face.
\end{observation}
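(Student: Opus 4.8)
The plan is to prove Observation~\ref{obs:adj} by a global counting/structural argument, exploiting the fact that a cleaned-up display graph of treewidth exactly $2$ is a nontrivial planar graph with minimum degree at least $2$. First I would record the structural consequences of the cleaning-up procedure: every inner vertex has degree exactly $3$, every taxon has degree at least $2$, there are no vertices of degree $2$ that are inner nodes, and no tree has fewer than $4$ taxa. In particular $D(\cT)$ is connected, has at least $4$ leaves, and is not a tree (its treewidth is $2$, not $1$), so it contains at least one cycle and hence at least one bounded face in the chosen planar embedding. The key point I want to extract is that $D(\cT)$ is $2$-connected, or can be reduced to a $2$-connected ``core'' in which faces are bounded by cycles; treewidth-$2$ graphs are exactly the series-parallel graphs (subgraphs of $2$-trees), and this gives me a clean handle on the face structure.

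The main idea of the argument is contrapositive and planar-dual in flavour. Suppose, for contradiction, that no two distinct bounded faces are adjacent, i.e. every bounded face shares its boundary only with the outer face. I would translate this into a statement about the \emph{dual graph}: adjacency of faces in $D(\cT)$ corresponds to edges of the planar dual $D(\cT)^*$. The hypothesis says that in $D(\cT)^*$ every vertex corresponding to a bounded face is adjacent only to the vertex corresponding to the outer face. Equivalently, the bounded faces form an independent set all of whose neighbours are the outer face, so $D(\cT)^*$ is a star centred at the outer face. A star dual corresponds to an \emph{outerplanar} primal graph in which all bounded faces meet the outer boundary, and I would argue that this forces a very restricted ``cactus-like'' structure: each internal edge of $D(\cT)$ lies on the boundary of exactly one bounded face and the outer face, so every edge is a bridge or lies on a single cycle touching the outside. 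Counting degrees then yields a contradiction, because the presence of two cycles sharing structure (guaranteed, since $tw(D)=2$ and the graph is large enough with all inner vertices of degree $3$) forces at least two bounded faces that touch each other.

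Concretely, I expect to run a discharging or Euler-formula count. Using $|V|-|E|+|F| = 2$ together with the degree constraints (sum of degrees $= 2|E|$, inner vertices contributing $3$ each, leaves at least $2$), I would bound the number of bounded faces from below and the number of face-adjacencies available to the outer face from above; if every bounded face were adjacent only to the outer face, the outer face would need to be incident to too many edges, exceeding the length of its bounding walk. Making this quantitative is where the real work lies: I need the outer face's boundary walk length to be strictly smaller than the total number of edge-sides that bounded faces must donate to it, which is where the assumption $tw(D)=2$ (so there is genuine series-parallel branching, not a single cycle) gets used to guarantee at least two ``ears.''

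The hard part will be handling the interface between planarity and the cleaning-up hypotheses rigorously, in particular justifying that we may assume enough connectivity for faces to be bounded by cycles (degree-$2$ leaves on the boundary and possible bridges make the boundary walks repeat vertices), and pinning down the exact counting inequality so that the single-cycle (outerplanar $\theta$-graph) case is excluded precisely by the combination ``$tw = 2$ and all inner nodes have degree $3$.'' I would isolate that case separately: if $D(\cT)$ were just one cycle, its treewidth would be $2$ but it would have no degree-$3$ inner vertices, contradicting cleaning-up, so the series-parallel structure must contain a genuine parallel composition, which is exactly what produces two adjacent bounded faces. Once that structural dichotomy is nailed down, identifying the two adjacent bounded faces---and verifying neither is the outer face---follows directly, and polynomial-time location of them is immediate from a single pass over the planar embedding's face-adjacency (dual) graph.
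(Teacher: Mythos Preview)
Your overall strategy---assume no two bounded faces are adjacent, deduce a cactus/outerplanar-like structure, then derive a contradiction from the degree constraints---has the same shape as the paper's, but you are missing the one combinatorial fact that actually makes the contradiction fire, and the Euler/discharging count you sketch will not produce it on its own.

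The decisive observation (which the paper states, though tersely) is that in a display graph no two taxa are adjacent: every edge of every input tree has an inner endpoint. Hence every cycle in $D(\cT)$ contains at least two inner nodes, and therefore at least two vertices of degree $3$. This is what kills the cactus case. Without it your inequality cannot close: take two vertex-disjoint triangles joined by a single bridge. This graph is connected, has minimum degree $2$, has treewidth $2$, and has exactly two bounded faces which are \emph{not} adjacent---yet it satisfies every numerical hypothesis you list (degree-$3$ ``inner'' vertices at the bridge endpoints, all other vertices of degree $2$, Euler's formula). What disqualifies it as a cleaned-up display graph is only that each triangle contains an edge between two degree-$2$ vertices, i.e.\ a taxon--taxon edge. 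Your proposed count uses $3i + 2t \leq 2|E|$ and $|V|-|E|+|F|=2$, neither of which sees this.

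Once one has ``every cycle contains at least two degree-$3$ vertices'', the paper's argument is short and avoids duals, outerplanarity and series-parallel machinery: Euler gives at least three faces, hence at least two bounded ones; if no two share an edge then all simple cycles are pairwise edge-disjoint and the graph is a cactus; a leaf block of the block--cut tree is then a cycle attached at a single cut vertex, so all its other vertices have degree $2$---contradicting the two degree-$3$ vertices that cycle must contain. Your appeal to ``a genuine parallel composition produces two adjacent bounded faces'' is essentially the conclusion you are after, not an independent step; and $D(\cT)$ need not be $2$-connected (taxa shared by several trees can be cut vertices), so the reduction to a $2$-connected core would itself require work you have not outlined.
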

\begin{proof}
Without loss of generality we prove this for the case when $D(\cT)$ is connected. Recall that all vertices in $D(\cT)$ have degree at least 2, and at least one vertex has at least degree 3 (due to the existence of internal nodes). Hence, by the handshaking lemma, the number of edges in $D(\cT)$ is strictly larger than the number of vertices, and thus we can use Euler's formula to conclude that $D(\cT)$ has at least 3 faces. One of these is the outer face, so $D(\cT)$ has at least two faces not equal to the outer face. Hence, $D(\cT)$ contains at least two simple cycles. If any two simple cycles have a common edge, then we are done, so let us assume that all simple cycles in $D(\cT)$ are edge disjoint (and chordless). However, this is not possible due to the fact that in every simple cycle at least two vertices have degree 3 or higher 
and the fact that all vertices in the graph have degree at least 2. (In particular, a simple cycle can never act as a ``sink'' to absorb excess degree, and there are also no leaves to fulfil this function.)
\end{proof}

\begin{observation}
\label{obs:adj2}
Let $\cT$ be a cleaned up, non-empty set of unrooted binary trees on $X$ such that $tw(D(\cT)) = 2$. Consider any planar embedding of $D(\cT)$. Let $e$ be a cut-edge of $D(\cT)$, and let $D_1$, $D_2$ be the two components obtained by deleting $e$. Then both $D_1$ and $D_2$ have their own pair of adjacent faces, neither equal to the outer face.
\end{observation}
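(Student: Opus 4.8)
The plan is to reduce the statement to Observation~\ref{obs:adj}, which already supplies a pair of adjacent non-outer faces for any connected cleaned-up display graph of treewidth~$2$. By symmetry it suffices to treat the component $D_1$ containing one endpoint of the cut-edge; write $e=(a,b)$ with $a\in D_1$. First I would record the elementary consequence of $e$ being a cut-edge: it lies on no cycle of $D(\cT)$, so in the inherited planar embedding it borders the same face on both sides, and every simple cycle of $D(\cT)$ lives entirely inside $D_1$ or entirely inside $D_2$. Hence the bounded faces of $D_1$ are exactly the bounded faces of $D(\cT)$ on the $a$-side, and it is enough to exhibit two of them that share an edge.

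The key preliminary step is to show that $D_1$ actually contains a cycle. Deleting $e$ lowers the degree of $a$ only, leaving every other vertex of $D_1$ at its original degree (which is at least $2$, since cleaning up guarantees all vertices of $D(\cT)$ have degree at least $2$). If $D_1$ were acyclic it would be a graph-theoretic tree, and a tree on at least two vertices has at least two vertices of degree $1$; but $a$ is the only vertex of $D_1$ that can drop below degree $2$, so $D_1$ would have to be the single vertex $\{a\}$, forcing $a$ to have had degree $1$ in $D(\cT)$ and contradicting cleanliness. Thus $D_1$ contains a cycle, and being a subgraph of a treewidth-$2$ graph that is not a forest, it has $tw(D_1)=2$.

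It then remains to put $D_1$ into the exact form demanded by Observation~\ref{obs:adj}. I would re-run the cleaning-up procedure on $D_1$: if $a$ has become a degree-$1$ vertex (the case where $a$ is a taxon of degree $2$ in $D(\cT)$) I delete it as a pendant and propagate, suppress any inner node that has dropped to degree $2$, and discard any tree now having fewer than four taxa. Since the only tree split by the removal of $e$ is the one containing $e$, and no other tree can straddle the two sides (all crossing paths use the bridge $e$), the result $\hat{D}_1$ is genuinely the display graph of a set of binary trees, namely the $a$-side trees together with the appropriately pruned split tree. None of the cleaning operations destroys a cycle, so $\hat{D}_1$ is non-empty and connected with $tw(\hat{D}_1)=2$; Observation~\ref{obs:adj} then yields two adjacent faces of $\hat{D}_1$, neither the outer face. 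Pulling these back through the cleaning-up operations, which neither merge nor delete bounded faces, produces two adjacent bounded faces of $D_1$ in the original embedding. Running the identical argument on $D_2$ finishes the proof.

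I expect the main obstacle to be precisely the bookkeeping around the endpoint $a$. Deleting the cut-edge reduces its degree, so $D_1$ is not cleaned up as it stands, and the whole reduction hinges on verifying that re-cleaning it (a) cannot annihilate the cycle produced in the second step, and (b) preserves the face incidences, so that the hypotheses of Observation~\ref{obs:adj} are legitimately met and its conclusion can be transported back to two adjacent non-outer faces of the component $D_1$ itself.
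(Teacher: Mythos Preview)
The paper takes a much more direct route than you do. Rather than reducing to Observation~\ref{obs:adj} by re-cleaning $D_1$, it simply records that deleting the bridge lowers the degree of exactly one vertex per component, so each $D_i$ has at most one vertex of degree~$1$, and then re-runs the ``sink'' argument from the proof of Observation~\ref{obs:adj} directly on $D_i$: a single potential leaf is still not enough to absorb the excess degree coming from the degree-$\geq 3$ vertices sitting on every simple cycle, so edge-disjoint cycles are impossible and two cycles sharing an edge must exist. No re-cleaning, no need to realise $D_i$ as a display graph in its own right.

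Your reduction has a genuine gap exactly where you anticipated it, and it is not just bookkeeping. The assertion ``none of the cleaning operations destroys a cycle'' is false for the operation that discards a tree on fewer than four taxa. Concretely: if $a$ is a taxon of degree~$2$ in $D(\cT)$, then in $D_1$ it has degree~$1$; deleting it and suppressing its neighbour can drop some tree $T'\subseteq D_1$ to three taxa, whereupon cleaning removes the resulting star $c$--$x$, $c$--$y$, $c$--$z$. Those three edges can perfectly well carry a cycle of $D_1$ (two spokes of the star together with a return path $x\cdots y$ through another tree), so after the full cascade you have no guarantee that $\hat D_1$ still contains a cycle, and the appeal to Observation~\ref{obs:adj} (which needs $tw(\hat D_1)=2$) is unjustified. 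Patching this would require showing that some pair of adjacent bounded faces survives every deletion in the cascade, but that is essentially the target statement itself; the paper's direct degree-count bypasses the whole issue in two lines.
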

\begin{proof}
This is a simple adaptation of the previous proof. Deleting $e$ reduces the degree of two vertices by exactly one, and all other degrees are unchanged. So $D_1$ and $D_2$ both contain at most one vertex of degree 1. 
From the previous ``sink'' observation we see that both $D_1$ and $D_2$ must contain two simple cycles with intersecting edges, and we are done.
\end{proof}


Recall the definition of minimal adajcency from the main text.

\begin{lemma}
\label{lemma:inclmin}
 Let $\cT$ be a cleaned up, non-empty set of unrooted binary trees on $X$ such that $tw(D(\cT)) = 2$. Consider any planar embedding of $D(\cT)$. Then there exist two distinct faces $F_1, F_2$ in $D(\cT)$ such that $F_1$ and $F_2$ are \underline{minimally} adjacent and neither is equal to the outer face. Also, these can be found in polynomial time.
\end{lemma}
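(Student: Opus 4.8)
The plan is to start from the adjacent pair guaranteed by Observation~\ref{obs:adj} and \emph{shrink} it into a minimally adjacent one, working throughout with bounded (non-outer) faces, which I will call \emph{inner} faces. I will assume the relevant part of $D(\cT)$ is $2$-connected, deferring cut-edges and cut-vertices to the end via Observation~\ref{obs:adj2}; then every face boundary is a simple cycle, and I first record the local fact that makes conditions~(2) and~(3) essentially automatic. Suppose two inner faces $F,F'$ share an edge, and let $P$ be the \emph{maximal} shared path through it, i.e.\ the longest sub-path of $B(F)\cap B(F')$ containing that edge. If $w$ is internal to $P$, both edges of $P$ at $w$ border both $F$ and $F'$; reading the rotation at $w$ and using that each face meets $w$ in a single corner (simple boundary cycles), there is no room for a third edge, so $\deg(w)=2$. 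If an endpoint $u$ of $P$ had degree $2$, its other edge would again border both faces and $P$ would extend, contradicting maximality; hence $\deg(u)\ge 3$. Thus \emph{every} maximal shared path between two inner faces already satisfies conditions~(2) and~(3), and the only way a pair can fail to be minimally adjacent is by sharing \emph{two or more} maximal paths.

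The heart of the argument is therefore to produce a pair sharing exactly one maximal path, which I would do by descent (strong induction on the number of faces). Take any adjacent inner pair $F_1,F_2$; if $B(F_1)\cap B(F_2)$ is a single path we are done. Otherwise pick two shared maximal paths $P,P'$ that are \emph{consecutive} around $F_1$, and let $A\subseteq B(F_1)$ and $B\subseteq B(F_2)$ be the boundary arcs joining an endpoint of $P$ to an endpoint of $P'$ on the side between them, so $A\cup B$ is a cycle. Since $F_1$ and $F_2$ lie on opposite sides of each of $P$ and $P'$, neither lies in the disk $\Delta$ bounded by $A\cup B$, so $\Delta$ contains only other faces. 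If $\Delta$ is a single face $G$, then $F_1$ and $G$ share exactly the path $A$ (its internal vertices have degree $2$, and its endpoints, being endpoints of $P,P'$, have degree $\ge 3$), so $(F_1,G)$ is minimally adjacent. If $\Delta$ contains at least two faces, the subgraph drawn inside $\overline{\Delta}$ inherits the cleaned-up hypotheses and has strictly fewer faces than $D(\cT)$, so by induction (equivalently, by re-running the counting of Observation~\ref{obs:adj} inside $\Delta$) it contains a minimally adjacent pair of its own faces, which are bounded in $\Delta$ and hence inner in $D(\cT)$. Since $\Delta$ strictly shrinks at each step the descent terminates. I expect this descent---verifying that the enclosed disk really is a strictly smaller, legitimate sub-instance in which an adjacent inner pair again exists, and that the path $A$ is maximal for the new pair---to be the main obstacle.

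It remains to discharge the connectivity assumption and bound the running time. If $D(\cT)$ has a cut-edge, Observation~\ref{obs:adj2} supplies, inside each side, a pair of adjacent faces neither equal to the outer face, and the descent applies within that side; cut-vertices are handled identically, and since within a single block face boundaries are simple cycles, distinct faces cannot meet at an isolated vertex, so $B(F_1)\cap B(F_2)$ is genuinely a path rather than a path together with stray vertices. Because we always stay with bounded faces (and, after the first step, strictly inside $\Delta$), neither chosen face is ever the outer face. Finally, every ingredient---computing a planar embedding, enumerating faces and their boundary walks, extracting the maximal shared path through an edge, testing for a second shared path, and recursing on the strictly smaller region $\Delta$---is a low-order polynomial-time operation, and the number of recursive steps is at most the number of faces; hence a minimally adjacent pair can be found in polynomial time.
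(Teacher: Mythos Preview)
Your route is genuinely different from the paper's. The paper makes a one-shot extremal choice: it labels every face by its distance, in the face-adjacency graph, to the outer face, takes $F_1$ of maximum label $k$, and then $F_2$ an adjacent face of maximum label among neighbours of label-$k$ faces; minimality is then established directly from this extremality (any extra shared component would enclose a face of label $>k$, and any degree-$\ge 3$ interior vertex of the shared path would force a cut-edge and, via Observation~\ref{obs:adj2}, again a face of too-high label). Your plan instead descends: begin with any adjacent inner pair and, while the shared boundary has more than one component, pass to a strictly smaller enclosed region. A nice by-product of your approach is the clean local fact that, in a $2$-connected plane graph, every maximal shared path already has degree-$2$ interior and degree-$\ge 3$ endpoints, so conditions (2) and (3) reduce to connectedness of $B(F_1)\cap B(F_2)$.

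There is, however, a concrete gap. The assertion ``within a single block face boundaries are simple cycles, so distinct faces cannot meet at an isolated vertex'' is false. At a vertex $w$ of degree $d\ge 4$ the $d$ corners correspond to $d$ distinct faces, and two non-consecutive corners share $w$ but no edge at $w$. Cleaned-up display graphs can have such vertices: a taxon has degree equal to the number of input trees containing it, and this is unbounded. Consequently $B(F_1)\cap B(F_2)$ may be a single edge-path together with isolated vertices, which violates condition~(2), yet your termination test (``one maximal shared path'') would halt there. The paper's proof explicitly allows the second component $P'$ to be a single vertex for exactly this reason. The repair is to trigger the descent whenever the intersection has more than one \emph{component}, including length-$0$ ones; but then you must check that the arcs $A$ and $B$ are internally disjoint (they may themselves meet at further isolated shared vertices, so $A\cup B$ need not be a simple cycle as claimed) and that the region $\Delta$ is well-defined when $P'$ degenerates to a point. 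A second, smaller issue: the region inside $\overline{\Delta}$ does not literally ``inherit the cleaned-up hypotheses''---it is not in general the display graph of a cleaned-up family of trees---so you cannot invoke the lemma's own induction hypothesis on it; you would need to run the induction over plane graphs satisfying the relevant degree constraints, or argue the existence of an adjacent inner pair inside $\Delta$ directly.
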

\begin{proof}
Fix any planar embedding of $D(\cT)$. Let $G$ be the graph whose vertices are the faces of $D(\cT)$ (including the outer face) and whose edges are the adjacency relation on those faces. We label each face $F$ of $G$ with the length of a shortest path in $G$ from $F$ to the outer face. Clearly, the outer face has label 0. Let $k$ be the maximum label ranging over all faces. We select a pair of distinct faces $(F_1, F_2)$ such that (1) $F_1$ has label $k$; (2) $F_2$ is adjacent to $F_1$; (3) $F_2$ has the largest label ranging over all faces that are adjacent to a face with label $k$. By Observation \ref{obs:adj}, $F_1$ and $F_2$ both have label at least 1. Note also that the label of $F_2$ is either $k-1$ or $k$. Clearly, $F_1$ and $F_2$ both satisfy property (1) of minimal adjacency. 

Consider now the sequence of vertices and edges $v_1, e_1, v_2, e_2, \ldots, v_n = v_1$ that define the boundary of face $F_1$. Observe that with the exception of $v_1 = v_n$ all vertices on the boundary are distinct. This is because, if the boundary of the face intersects with itself, it creates a new face $F_3$ ``inside'' $F_1$ whose shortest path to the outer face is strictly larger than $k$, contradicting the minimaliity of $k$. Hence, $B(F_1)$ is a simple cycle. From this it follows that $B(F_1) \cap B(F_2)$ is a subgraph of a simple cycle. In particular, it can be (a) a simple cycle or (b) a set of one or more paths (where some of the paths might have length 0). We show that (a) cannot happen.  To see this, observe that (a) can only happen if $B(F_1) \subseteq B(F_2)$. From the degree constraints mentioned earlier the simple cycle defining $F_1$ contains at least 2 vertices of degree 3 or higher, in $D(\cT)$. These two vertices $u_1, u_2$ generate paths that cannot enter the interior of $F_1$, because they would then necessarily slice $F_1$ up into smaller faces. Moreover, there cannot exist a path from $u_1$ to $u_2$ that avoids $B(F_1)$, because this would imply the existence of a third face $F_3$ adjacent to $F_1$, such that $B(F_1) \cap B(F_3)$ contains an edge not in $B(F_2)$. In particular, this would contradict $B(F_1) \subseteq B(F_2)$.  For a similar reason, the generated paths cannot re-intersect with $B(F_1)$. Careful analysis shows that the only remaining possibility is that $F_2$ is the outer face, contradicting the fact that the label of $F_2$ is at least 1. Hence we conclude that (a) is not possible, and that (b) must hold.

We now establish property (2) of minimal adjacency. In particular we show that $B(F_1) \cap B(F_2)$ has a single component. By the definition of adjacency, and the fact that (b) holds, at least one component in $B(F_1) \cap B(F_2)$ is a path $P$ on one or more edges. Clearly, the two endpoints of $P$ must (in $D(\cT)$) have degree 3 or higher, otherwise $P$ could be extended further. Without loss of generality consider the lower endpoint $u$. Let $e$ be an edge incident to $u$ (in $D(\cT)$) that is not in $B(F_1) \cap B(F_2)$ but which is incident to $F_1$ (such an edge must exist). The second face incident to $e$ cannot be $F_2$, because otherwise $e$ would be in $P$, so it must be some other face $F_3 \neq F_2$. Suppose there exists a path $P' \neq P$ in $B(F_1) \cap B(F_2)$. (Possibly, $P'$ is a single vertex). In this case it is possible to draw a closed curve that passes through $P$ and $P'$ and such that the only face interiors that it intersects with, are those of $F_1$ and $F_2$ (see Figure \ref{fig:curve}). Informally this means that face $F_3$ is entirely ``enclosed'' by $F_1$ and $F_2$.

\begin{figure}[h]
\begin{center}
\includegraphics[width=5cm]{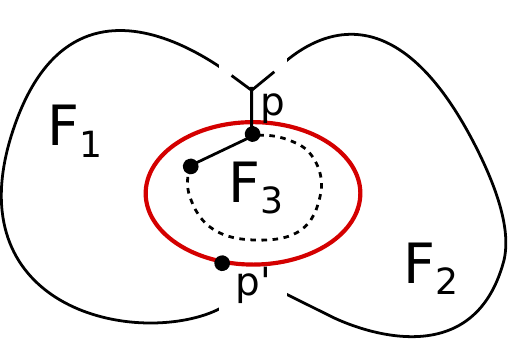}
\end{center}
\caption{\emph{If $B(F_1) \cap B(F_2)$ consists of two or more components $P, P'$ then it is possible to draw a curve (shown in red) completely enclosing a third face $F_3$, yielding a contradiction on the choice of $F_1$ and/or $F_2$.}}
\label{fig:curve}
\end{figure}

More precisely, it means that any (shortest) path in $G$ from $F_3$ to the outer face must pass through $F_1$ or $F_2$. If such a shortest path travels via $F_1$, then the label
of $F_3$ is at least $k+1$, contradicting the maximality of $k$. If it travels via $F_2$, then it has label $k$ or $k+1$. The latter is clearly a contradiction, but also the former because this contradicts our earlier choice of $F_2$ (i.e. we should have chosen $F_3$ instead of $F_2$). Hence, $B(F_1) \cap B(F_2)$ indeed consists of a single path $P$ (containing at least one edge).

It remains to prove property (3). We have already established that the endpoints of $P$ have degree 3 or more in $D(\cT)$. If $P$ has no interior vertices, or all interior vertices of $P$ have degree 2 in $D(\cT)$, we are done. So suppose $P$ contains an interior vertex $u$ of degree 3 or more in $D(\cT)$. Let $e$ be an edge incident to $u$ (in $D(\cT)$) that is not in $B(F_1) \cap B(F_2)$. Clearly, $e$ starts a path that extends into the interior of $F_1$ or $F_2$. If $e$ is not a cut-edge then the path it starts must re-intersect with the boundary of $F_1$ or $F_2$, but this causes a face to be partitioned into smaller pieces, which is not possible. Hence, $e$ must be a cut-edge. From Observation \ref{obs:adj2} deleting $e$ yields two or more adjacent faces that are entirely ``enclosed'' by $F_1$ or $F_2$. If they are enclosed by $F_1$ then they both have label $k+1$, which is a contradiction. If they are enclosed by $F_2$, and $F_2$ has label $k$, the same contradiction is obtained. If they are enclosed by $F_2$, and $F_2$ has label $k-1$, then they both have label $k$, contradicting the fact that we chose $F_2$ in the first place.

Polynomial time is assured since recognition of treewidth 2, planar embeddings and determination of the labels can all be computed
in (low-order) polynomial time.

\end{proof}


\end{document}